\newtheorem{proposition}{Proposition}
\newtheorem{corollary}{Corollary}
\newtheorem{theorem}{Theorem}
\newtheorem{lemma}{Lemma}
\newtheorem{remark}{Remark}
\newtheorem{question}{Question}
\DeclareMathOperator{\const}{const}
\begin{document}
\title{Growth of bilinear maps}
\author{Vuong Bui}
\address{Vuong Bui, Institut f\"ur Informatik, 
Freie Universit\"{a}t Berlin, Takustra{\ss}e~9, 14195 Berlin, Germany}
\thanks{The author is supported by the Deutsche 
Forschungsgemeinschaft (DFG) Graduiertenkolleg ``Facets of Complexity'' 
(GRK 2434).}
\email{bui.vuong@fu-berlin.de}
\begin{abstract}
For a bilinear map $*:\mathbb R^d\times \mathbb R^d\to \mathbb R^d$ with nonnegative coefficients and a vector $s\in \mathbb R^d$ of positive entries, among an exponential number of ways combining $n$ instances of $s$ using $n-1$ applications of $*$ for a given $n$, we are interested in the largest entry over all the resulting vectors. An asymptotic behavior is that the $n$-th root of this largest entry converges to a growth rate $\lambda$ when $n$ tends to infinity. In this paper, we prove the existence of this limit by a special structure called linear pattern. We also pose a question on the possibility of a relation between the structure and whether $\lambda$ is algebraic.
\end{abstract}
\maketitle
\section{Introduction}
Given a binary operation $*$ and an operand $s$, we have a variety of ways to combine $n$ instances of $s$ using $n-1$ applications of $*$. The results may vary as the operation $*$ is not necessarily commutative or associative. However, we might still expect that the ``largest  value'' of all the combinations does not grow too arbitrarily. A problem of this type was posed in \cite{rote2019maximum} by G\"unter Rote, where $*$ is a bilinear map with nonnegative coefficients and $s$ is a vector of positive entries, both in the same vector space. In this paper, the largest entry of all resulting vectors will be shown to be of exponential order with a fixed growth rate.

Consider a vector $s\in \mathbb R^d$ of \emph{positive} entries $s_i$ and a bilinear map $*: \mathbb R^d\times \mathbb R^d\to \mathbb R^d$ represented by \emph{nonnegative} coefficients $c_{i,j}^{(k)}$ in the way: If $v=u*w$ then $v_k=\sum_{i,j} c_{i,j}^{(k)} u_i w_j$.

Let $A_n$ for an integer $n\ge 1$ be the set of all possible vectors obtained by applying $n-1$ instances of $*$ to $n$ instances of $s$, that is $A_1=\{s\}$ and
\[
A_n = \bigcup_{1\le m\le n-1} \{x*y: x, y \in A_m \times A_{n-m}\}.
\]

For example, $A_2 = \{s*s\}$, $A_3= \{s*(s*s), (s*s)*s\}$ and $A_4= \{s*(s*(s*s)), s*((s*s)*s), (s*s)*(s*s), (s*(s*s))*s, ((s*s)*s)*s\}$. The number of combinations for $n$ is actually $C_{n-1}$, where $C_n$ is the $n$-th Catalan number. However, two different combinations may give the same result.

Let $g(n)$ denote the largest entry over all vectors in $A_n$, that is
\[
g(n) = \max \{x_i: x\in A_n, 1\le i\le d\}.
\]

For later convenient usage, we also denote by $g_i(n)$ the largest $i$-th entry over all vectors in $A_n$, that is
\[
g_i(n)=\max\{x_i: x\in A_n\}.
\]

We call the pair $(*,s)$ a \emph{system} and the following limit $\lambda$ the \emph{growth rate} of the system:
\[
\lambda = \lim_{n\to\infty} \sqrt[n]{g(n)}.
\]

We will prove the validity of this limit and give further discussions after introducing some definitions related to a special structure called \emph{linear pattern}.

Note that all the above definitions $A_n, g(n), g_i(n), \lambda$ together with some other new definitions $M(P), \lambda_P$ below depend on the considered system $(*,s)$.
Unless stated otherwise, these terms should be understood with the system given by the context.

Suppose we are given a rooted (plane) binary tree with $n$ leaves and each leave is assigned a vector. 
We can obtain a vector for any subtree from the following computation: If the subtree is just a leaf then the result is the vector assigned to that leaf; otherwise, the result is $x*y$ where $x,y$ are the results corresponding to the left and right branches, respectively. We are interested in the result for the whole tree itself, which is called the \emph{vector associated} with that tree.
If the $n$ leaves from left to right are assigned vector variables, then there is a bijection between the set of binary trees and the set of results obtained from the above process. When we fix a vector constant $s$ for each leaf, we obtain $A_n$ as the set of the results, but we may also lose injectivity in the same time.
Although there may be more than one tree giving the same $v\in A_n$, the techniques used in the work are independent of the assignment of the tree to be associated with $v$.

We call a pair of a tree $T$ with at least $2$ leaves and a \emph{marked leaf} $\ell$ of $T$ \emph{a linear pattern} $P=(T,\ell)$. This definition has some interesting properties.

\begin{wrapfigure}{l}{0.43\textwidth}
    \centering
    \includegraphics[width=0.18\textwidth]{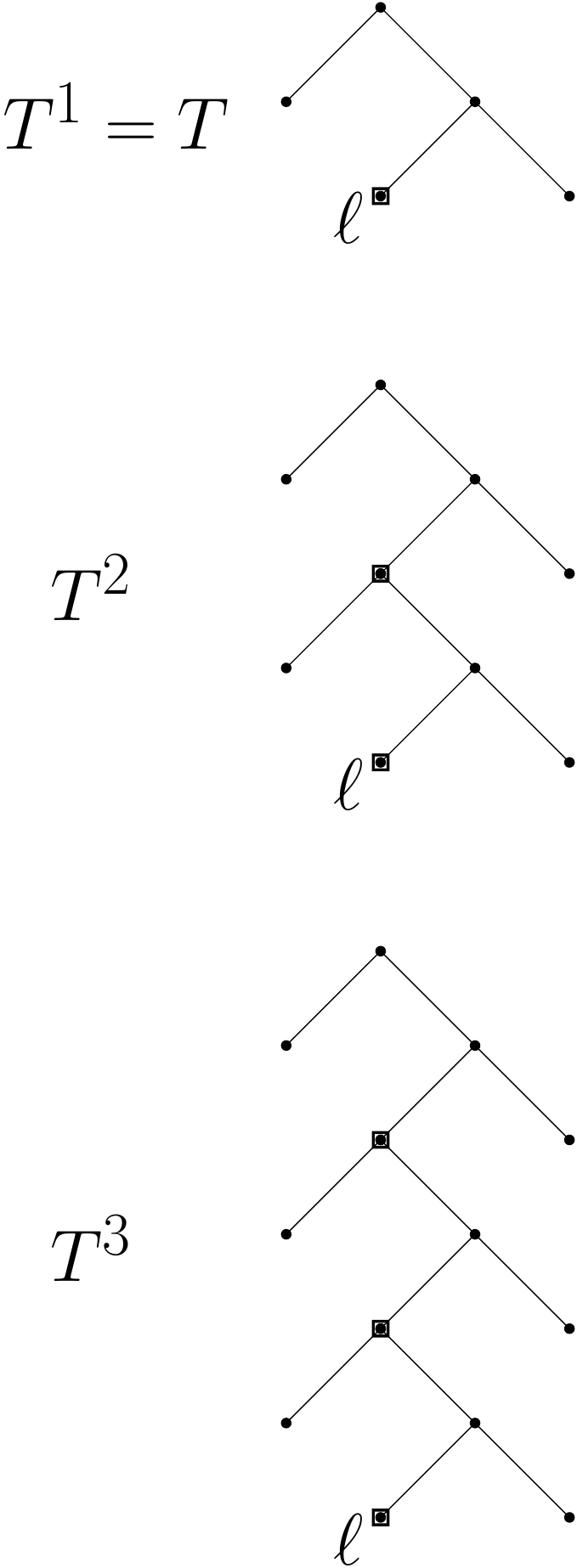}
\caption{The sequence generated by the pattern $(T,\ell)$}
\label{fig:pattern}
\end{wrapfigure}

\begin{proposition}
  \label{prop:lin-relation}
  Given a linear pattern $P=(T,\ell)$, if in the computation corresponding to $T$, the value of the marked leaf is a vector variable $u$ instead of the fixed vector $s$, then instead of a fixed result we obtain a vector $v$ related to $u$ by a matrix $M=M(P)$ such that
  \[
  v=Mu.
  \]
\end{proposition}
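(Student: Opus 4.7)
The plan is to induct on the number of leaves $n$ of $T$. The only property of $*$ used is bilinearity: since every leaf other than $\ell$ carries the fixed value $s$, the vector associated with any subtree of $T$ that contains $\ell$ depends linearly on $u$, and every linear map $\mathbb R^d\to\mathbb R^d$ is represented by a unique matrix.

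For the base case $n=2$, the tree computes either $u*s$ or $s*u$. Bilinearity gives
\[
(u*s)_k=\sum_{i}\Bigl(\sum_{j} c_{i,j}^{(k)} s_j\Bigr) u_i,
\]
so $u*s = Mu$ with $M_{k,i}=\sum_j c_{i,j}^{(k)} s_j$; the case $s*u$ is symmetric with $M'_{k,i}=\sum_j c_{j,i}^{(k)} s_j$.

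For the inductive step, let $T_L$ and $T_R$ be the left and right subtrees at the root of $T$. The marked leaf $\ell$ lies in exactly one of them; assume without loss of generality that $\ell\in T_L$. Every leaf of $T_R$ carries $s$, so $T_R$ evaluates to a fixed vector $w\in A_m$ for some $m$. If $T_L$ is just the leaf $\ell$, its value is $u$; otherwise $(T_L,\ell)$ is a strictly smaller linear pattern, and by the induction hypothesis its value is $M_L u$ for some matrix $M_L$. In both cases we have $v_L=M_L u$, with $M_L=I$ in the former. Then $v=v_L*w$, and bilinearity gives $v = N v_L = (NM_L)u$, where $N_{k,i}=\sum_j c_{i,j}^{(k)} w_j$. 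Taking $M=NM_L$ completes the induction; the case $\ell\in T_R$ is symmetric.

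The argument is a routine induction and contains no real obstacle; the only point needing attention is the possibility that the subtree containing $\ell$ consists of just the leaf $\ell$, which falls outside the definition of a linear pattern and must be handled as a separate base-case instance (with the identity matrix). As a byproduct, the proof shows that $M(P)$ is a product of matrices of the simple form $N$ above, one for each node on the path from the root to $\ell$, which should be convenient for later estimates of $\lambda$.
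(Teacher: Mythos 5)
Your induction is correct and matches the paper's intended argument: the paper's remark after the proposition describes exactly this construction, fixing the non-marked branch to get a linear map at each node and composing the resulting matrices along the path from the root to $\ell$ (e.g.\ $s*(u*s)=M_RM_Lu$). The formal induction on the number of leaves, including the base-case handling when the subtree containing $\ell$ is just $\ell$ itself, is a routine and faithful write-up of that sketch.
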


This fact follows from a property of bilinear maps: If we fix one of the two terms of the input, the new map will be linear. In other words, the two functions $*^L_y(x) = x*y$ and $*^R_x(y) = x*y$ are both linear. The matrix $M$ can be then constructed in a bottom-up strategy. For example, consider the pattern $(T,\ell)$ in Figure \ref{fig:pattern}, the usual resulting vector is $s*(s*s)$. If we allow the marked leaf to take a vector variable $u$, then the resulting vector $v$ becomes $s*(u*s)$. Since $*^L_s$ and $*^R_s$ are both linear, let $M_L, M_R$ be the matrices associated with $*^L_s, *^R_s$, respectively. We can see that $s*(u*s) = *^R_s(*^L_s(u)) = M_R (M_L u)$, hence $v=Mu$ for $M=M_RM_L$. More manipulations of this type can be found in Section \ref{sec:no-pattern-ex}. Such a matrix $M$ is called the \emph{associated matrix} with pattern $P$. 

A sequence of trees $\{T^t\}_{t\ge 1}$ is said to be generated by a pattern $(T, \ell)$ if $T^1=T$ and $T^t$ for $t\ge 2$ is obtained from $T$ by replacing $\ell$ by $T^{t-1}$ (see Figure \ref{fig:pattern} for example). When we address the marked leaf of $T^t$, we mean the marked leaf of the deepest instance of $T$ embedded in $T^t$.

\begin{proposition}
\label{prop:spectral-radius}
  For a linear pattern $P=(T,\ell)$, let $h(t)$ be the largest entry of the vector associated with the tree $T^t$, then the limit
  \[
  \lambda_P = \lim_{t\to\infty} \sqrt[t]{h(t)}
  \]
  is valid and equal to the spectral radius of the matrix $M(P)$.
\end{proposition}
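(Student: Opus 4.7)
My plan is to reduce the claim to Gelfand's spectral radius formula, $\rho(M) = \lim_{t\to\infty} \|M^t\|^{1/t}$, applied with the matrix norm induced by the $\ell^\infty$ vector norm.

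First, I would use Proposition \ref{prop:lin-relation} as the main building block. Applying it to the outer copy of $T$ inside $T^t$ (so that the grafted-in $T^{t-1}$ occupies the marked-leaf position and contributes its associated vector as the input there), one obtains the recurrence $v^{(t)} = M\, v^{(t-1)}$ for the vector $v^{(t)}$ associated with $T^t$, with base case $v^{(1)} = Ms$. A short induction then gives $v^{(t)} = M^t s$, so that
\[
h(t) = \max_{1 \le i \le d} (M^t s)_i = \|M^t s\|_\infty.
\]
The upper bound is immediate: $\|M^t s\|_\infty \le \|M^t\|_\infty \|s\|_\infty$, hence $\limsup_t h(t)^{1/t} \le \rho(M)$ by Gelfand.

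The matching lower bound is the main obstacle, because for generic data $\|M^t s\|_\infty^{1/t}$ could converge to a value strictly smaller than $\rho(M)$ if $s$ happened to lie in an invariant subspace supported on subdominant eigenvalues. Here, both structural assumptions on the system rule this out. On the one hand, $M$ is a product of matrices of the linear maps $x \mapsto x*s$ and $y \mapsto s*y$, each of which is entrywise nonnegative because the coefficients $c_{i,j}^{(k)}$ and the entries of $s$ are nonnegative; hence $M \ge 0$ entrywise. On the other hand, $s$ has strictly positive entries, so $s \ge s_{\min}\, \mathbf{1}$ componentwise with $s_{\min} = \min_i s_i > 0$ and $\mathbf{1}$ the all-ones vector.

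Combining these, $M^t s \ge s_{\min}\, M^t \mathbf{1}$ entrywise, and the entries of $M^t \mathbf{1}$ are exactly the row sums of $M^t$, whose maximum equals $\|M^t\|_\infty$. Therefore $h(t) \ge s_{\min}\, \|M^t\|_\infty$, and Gelfand's formula yields $\liminf_t h(t)^{1/t} \ge \rho(M)$. Together with the upper bound, this proves that $\lambda_P = \lim_t h(t)^{1/t}$ exists and equals $\rho(M)$.
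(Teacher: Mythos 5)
Your proof is correct and follows essentially the same route as the paper: establish $v_t = M^t s$ from Proposition \ref{prop:lin-relation} and invoke Gelfand's formula. The paper leaves the deduction at that one line, whereas you usefully make explicit the lower-bound step ($h(t) \ge s_{\min}\|M^t\|_\infty$ via $M \ge 0$ and $s > 0$), which is exactly the detail needed for Gelfand's formula to give the limit and not just an upper bound.
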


Let $v_t$ be the vector associated with $T^t$, then $v_t=M^t s$ as a consequence of Proposition \ref{prop:lin-relation}. Therefore, Proposition \ref{prop:spectral-radius} can be deduced from Gelfand's formula: For every (not necessarily nonnegative) complex matrix $A$, $\lim_{n\to\infty} \sqrt[n]{\|A^n\|} = \rho(A)$, where $\rho(A)$ is the spectral radius and $\|A^n\|$ is any matrix norm.

The tree $T^t$ has $t(|T|-1)+1$ leaves, where $|T|$ is the number of leaves of $T$. While $h(t)$ is a lower bound for a subsequence of $g(n)$, the corresponding lower bound for the growth rate should be the $(|T|-1)$-th root $\bar\lambda_P$ of $\lambda_P$ instead. We call $\bar\lambda_P$ the \emph{rate of pattern} $P$.

\begin{proposition}
\label{prop:lower-bound}
  For every linear pattern $P=(T,\ell)$,
  \[
  \liminf_{n\to\infty} \sqrt[n]{g(n)} \ge \bar\lambda_P.
  \]
\end{proposition}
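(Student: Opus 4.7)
The plan is to establish the inequality first along the arithmetic subsequence $n_t := t(|T|-1)+1$ and then interpolate across the other residues modulo $|T|-1$ by padding $T^t$ with a bounded number of extra leaves. Since $T^t\in A_{n_t}$ has associated vector $v_t=M^t s$ with $\|v_t\|_\infty=h(t)$, we have $g(n_t)\ge h(t)$, and Proposition~\ref{prop:spectral-radius} immediately gives
\[
g(n_t)^{1/n_t}\ge h(t)^{1/n_t}=\bigl(h(t)^{1/t}\bigr)^{t/n_t}\longrightarrow \lambda_P^{1/(|T|-1)}=\bar\lambda_P.
\]
This handles $n\equiv 1\pmod{|T|-1}$, which suffices for $\limsup$ but not for $\liminf$, so one has to give a comparable lower bound at the other $|T|-2$ residues inside each window $(n_t,n_{t+1})$.

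For each $k\in\{1,\dots,|T|-1\}$ I fix an auxiliary tree $U_k$ on $k$ leaves with associated vector $w_k$ (with $U_1$ a single leaf, so $w_1=s$), and form $\tilde T^{t,k}$ from $T^t$ by replacing its marked leaf with $U_k$. Iterating Proposition~\ref{prop:lin-relation} through the outer copies of $T$ shows that the associated vector is $M^t w_k$, while $\tilde T^{t,k}$ has $n_t+k-1$ leaves. The key step is to choose each $U_k$ so that $w_k\ge\alpha_k\,s$ componentwise for some $\alpha_k>0$: because $M\ge 0$ entrywise, this forces $M^t w_k\ge\alpha_k M^t s$ and hence
\[
g(n_t+k-1)\ge\|M^t w_k\|_\infty\ge\alpha_k\,h(t).
\]
A concrete candidate is the left-comb $U_k=(\cdots((s*s)*s)\cdots)*s$, whose associated vector one verifies by induction, using $s>0$ and $c_{i,j}^{(k)}\ge 0$, to have strictly positive entries on every output coordinate that is not identically killed by $*$.

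Combining the two pieces, every $n\ge|T|$ decomposes uniquely as $n=n_{t(n)}+k(n)-1$ with $t(n)\to\infty$ and $k(n)\in\{1,\dots,|T|-1\}$ bounded; setting $\alpha=\min_k\alpha_k>0$, this gives
\[
g(n)^{1/n}\ge\alpha^{1/n}\bigl(h(t(n))^{1/t(n)}\bigr)^{t(n)/n}\longrightarrow 1\cdot\bar\lambda_P=\bar\lambda_P,
\]
since $t(n)/n\to 1/(|T|-1)$, and taking $\liminf$ concludes the proof. The main obstacle is producing $w_k\ge\alpha_k s$; this requires a mild non-degeneracy of the bilinear map, and the degenerate output coordinates $i_0$ for which $c_{i,j}^{(i_0)}=0$ for all $i,j$ contribute $0$ to every $x\in A_n$ with $n\ge 2$ and make the $i_0$-th row of $M$ vanish, so they can be pruned from the system without affecting $g(n)$ for $n\ge 2$ or $\bar\lambda_P$, reducing to the non-degenerate case where the left-comb construction works.
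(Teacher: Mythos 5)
Your proof follows the paper's argument for this proposition almost verbatim: establish the bound along the subsequence $n_t=t(|T|-1)+1$ via Proposition~\ref{prop:spectral-radius}, then cover the other residues by grafting a bounded padding tree onto the marked leaf of $T^t$ and pushing the estimate through using $M\ge 0$. The subsequence computation and the bookkeeping $t(n)/n\to 1/(|T|-1)$ are fine. The one point where you go beyond the paper is in worrying whether the padding vector satisfies $w_k\ge\alpha_k s$; the paper simply asserts that the vector associated with any padding tree is positive. In the non-degenerate case (every output coordinate $k$ has some $c^{(k)}_{i,j}>0$) this is immediate by induction, since $s>0$ forces every vector in every $A_n$ to be strictly positive, so any choice of $U_k$ works and your left-comb is fine.

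Your reduction of the degenerate case, however, does not work as stated. A coordinate $i_0$ with $c^{(i_0)}_{i,j}=0$ for all $i,j$ indeed outputs $0$ on every $x\in A_n$ with $n\ge 2$, but it cannot be deleted from the system without changing it: $i_0$ may still occur as an \emph{input} index, i.e.\ some $c^{(k)}_{i_0,j}$ or $c^{(k)}_{j,i_0}$ may be positive, and those terms are active at every internal node one of whose children is a leaf, because the leaf carries $s$ with $s_{i_0}>0$. Pruning $i_0$ therefore strictly decreases the remaining entries, hence decreases $g(n)$ and the entries of $M(P)$; for example, with $s=(1,1)$ and $x*y=(x_1y_2,0)$, the two-leaf pattern has rate $1$ in the original system, while pruning coordinate $2$ leaves the zero map with all rates $0$. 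So proving the inequality for the pruned system only yields $\liminf_n\sqrt[n]{g(n)}\ge\bar\lambda_P^{\mathrm{pruned}}$, which can be strictly weaker than the claim. Moreover, even after removing all such $i_0$, zeros can reappear at depth two (a non-degenerate coordinate may depend only on degenerate ones), so one pruning pass does not reach the case where the left-comb is positive. To be fair, the paper's own one-line justification glosses over exactly the same issue; but since you flagged it as the crux, the fix you offer needs to be repaired --- either restrict to (or properly reduce to) the non-degenerate case, or argue directly that the columns of $M^t$ carrying the weight $\rho(M)^t$ correspond to coordinates on which a suitable padding tree can be made positive.
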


Indeed, let $m=|T|-1$, for each $n=mp+r$ ($1\le r\le m$), consider the tree with $n$ leaves obtained from $T^p$ by replacing the marked leaf by any tree $T_0$ with $r$ leaves. Since the associated vector with $T_0$ is positive and bounded, it is not hard to see that the $n$-th root of the largest entry of the vectors associated with these trees converges to $\bar\lambda_P$.

Moreover, we give the following stronger conclusion, which confirms the validity of $\lambda$.

\begin{theorem}
  \label{thm:pattern}
The $n$-th root of $g(n)$ converges when $n$ tends to infinity and the limit is the supremum of $\bar\lambda_P$ over all patterns $P$, that is
\[
\lambda=\lim_{n\to\infty} \sqrt[n]{g(n)} = \sup_P \bar\lambda_P.
\]
\end{theorem}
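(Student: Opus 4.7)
By Proposition \ref{prop:lower-bound} we already have $\liminf_{n\to\infty} g(n)^{1/n}\ge \lambda^*:=\sup_P \bar\lambda_P$, so what remains is the converse $\limsup_{n\to\infty} g(n)^{1/n}\le \lambda^*$; the two together will force the limit to exist and equal $\lambda^*$. The plan is to argue by contrapositive: assuming $g(n)>\mu^n$ for infinitely many $n$ and some $\mu>\lambda^*$, I would construct a linear pattern $P$ with $\bar\lambda_P>\lambda^*$, contradicting the definition of $\lambda^*$.

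The construction starts from an optimal tree $T_n$ with associated vector $v$ of maximum entry $g(n)>\mu^n$. The natural candidate $P=(T_n,\ell)$ has matrix $M=M(P)$ with $Ms=v$, and by the Collatz--Wielandt characterisation $\rho(M)\ge \min_i v_i/s_i$; so if $v$ were ``balanced,'' meaning $\min_i v_i\ge c\|v\|_\infty$ for a fixed $c>0$, then $\bar\lambda_P=\rho(M)^{1/(n-1)}\ge (cg(n)/s_{\max})^{1/(n-1)}$, which tends to $\mu$ as $n\to\infty$ and exceeds $\lambda^*$ once $n$ is large. Hence the task reduces to producing near-optimal trees whose associated vectors are balanced.

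The main obstacle is precisely this balancing: without more assumptions, $v$ may collapse onto a proper subset of coordinates, so that $\min_i v_i=0$ and the Collatz--Wielandt bound is vacuous, while the ``one-shot'' growth $\|Ms\|_\infty$ can still exceed $\rho(M)^{n-1}$ by an exponential factor. The fix I would pursue is to prepend a fixed ``smoothing'' pattern $Q_0$ whose matrix $M(Q_0)$ has strictly positive entries: splicing $Q_0$ into $T_n$ at the root costs only $|T_{Q_0}|-1$ extra leaves and replaces $v$ by $M(Q_0)v$, which inherits the magnitude $\Theta(g(n))$ and is automatically balanced. The pattern obtained from this enlarged tree then witnesses $\bar\lambda>\lambda^*$, giving the contradiction.

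Existence of such a smoothing pattern is essentially an irreducibility hypothesis on the coefficients $c_{ij}^{(k)}$ combined with positivity of $s$; for reducible systems, one would decompose the coefficient hypergraph of $*$ into its strongly connected components and argue separately on each, showing that coordinates outside the dominant component cannot carry the exponential growth in the first place. This decomposition, together with the proof that a smoothing pattern is always available within the dominant component (with a uniform positivity constant), is where I expect the bulk of the technical work to concentrate; once these are in place the rest of the proof follows by combining Collatz--Wielandt with the spectral-radius formula already used for Proposition \ref{prop:spectral-radius}.
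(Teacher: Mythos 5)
Your lower bound and the overall reduction (a failure of the upper bound would yield a pattern beating $\sup_P\bar\lambda_P$) agree with the paper, and your Collatz--Wielandt step is morally the paper's observation that a pattern whose matrix has a large diagonal entry has a large rate. In the case where some pattern matrix is strictly positive your argument is sound and genuinely simpler than the paper's. But the part you defer as ``the bulk of the technical work'' is the entire content of the theorem, and the fix you sketch for it does not work. Test it on the system of Theorem \ref{thm:no-pattern-ex}: $s=(1,1)$, $x*y=(x_1y_1+x_2y_2,\,x_2y_2)$. There every pattern matrix has the form $\left[\begin{smallmatrix}a&b\\0&1\end{smallmatrix}\right]$, so no smoothing pattern with strictly positive matrix exists; for the whole-tree pattern $P=(T_n,\ell)$ one has $\rho(M(P))=a$ while $g(n)=a+b$ is dominated by the cross-component entry $b$ (for the perfect tree with $4$ leaves, $M=\left[\begin{smallmatrix}2&3\\0&1\end{smallmatrix}\right]$, $\rho(M)=2$ but $g(4)=5$), so $\rho(M(P))^{1/(n-1)}\not\to\lambda$, and no modification by a bounded prefix can repair this since by Theorem \ref{thm:no-pattern-ex} no pattern attains $\lambda$ at all. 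Your proposed dichotomy---``coordinates outside the dominant component cannot carry the exponential growth''---misdiagnoses the obstruction: the growth does sit in the dominant coordinate $1$ of the vector, but inside the matrix it sits in the cross-component entry $M_{1,2}$, which is invisible to the spectral radius. (A separate, smaller issue: even for a strongly connected dependency graph a strictly positive pattern matrix need not exist, because every pattern matrix is supported on walks of a fixed length, so periodicity is an obstruction; e.g.\ $x*y=(2x_2y_2,\,2x_1y_1)$.)

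The missing idea, which is the heart of the paper's proof, is to split the optimal tree by Lemma \ref{lem:tree-division} into a subtree with $m\in[n/3,2n/3]$ leaves, factor the whole-tree matrix as $M=AB$, pick $k$ with $A_{i,k}B_{k,j}$ at least a constant times $g_i(n)$, and then show---using the ordering of components, condition \eqref{eq:requirement-of-C_0}, and the entrywise upper bounds of Lemmas \ref{lem:entries-not-too-big} and \ref{lem:bounded-ratio-adj}---that $k$ must lie in the same component as $i$, for otherwise $A_{i,k}$ would exceed its a priori bound. Only once $i$ and $k$ are in the same component can the large entry $A_{i,k}$ be closed into a cycle (Lemma \ref{lem:same-comp-make-pattern}) to produce a large diagonal entry and hence a pattern of rate close to $\limsup_n g_i(n)^{1/n}$; the balanced split guarantees $n-m\to\infty$ so the bounded overhead of closing the cycle washes out. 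Without this factorization-and-dichotomy step your argument does not cover reducible systems, which include the paper's own central example.
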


The growth rate $\lambda$ is said to be \emph{recognized} by a pattern if the rate of the pattern is $\lambda$. There exist some cases for which no pattern recognizes $\lambda$. The following system is one example.

\begin{theorem}
\label{thm:no-pattern-ex}
If $s=(1,1)$ and
\[
x*y = (x_1 y_1 + x_2 y_2, x_2 y_2),
\]
then $\lambda > \bar\lambda_P$ for every $P$.
\end{theorem}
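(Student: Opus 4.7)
The plan exploits the special structure of this system: since $s_2=1$ and $(x*y)_2 = x_2 y_2$, an easy induction shows that every $v\in A_n$ satisfies $v_2=1$. Consequently $g(n)$ is just the maximum first component over $A_n$, and the first-component recursion collapses to the scalar form
\[
g(n) = \max_{1 \le m < n} \bigl(g(m)\, g(n-m) + 1\bigr),
\]
which yields the \emph{strict} super-multiplicativity $g(m+n) \ge g(m)g(n) + 1 > g(m)g(n)$.

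From this strict super-multiplicativity I derive the key inequality $g(n) < \lambda^n$ for every $n \ge 1$. Indeed, $\log g$ is strictly super-additive, so by Fekete's lemma $\log g(n)/n \to \sup_n \log g(n)/n = \log\lambda$. If the supremum were attained at some $n_0$, i.e., $g(n_0) = \lambda^{n_0}$, then $g(2n_0) \ge g(n_0)^2 + 1 > \lambda^{2n_0}$, contradicting that $\log\lambda$ is the supremum. Hence $g(n) < \lambda^n$ for every $n$.

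I then analyze an arbitrary pattern $P = (T,\ell)$. Replace $\ell$ by a variable $u$ and track second components: the invariant above forces the second coordinate at any subtree of $T$ to equal $u_2$ if that subtree contains $\ell$ and $1$ otherwise, so $M(P)$ has the upper-triangular shape
\[
M(P) = \begin{pmatrix} a & b \\ 0 & 1 \end{pmatrix}, \qquad \rho(M(P)) = \max(a,1) = a.
\]
Walking from $\ell$ up to the root, at each level $i$ the marked-leaf-containing subtree is combined with a fixed-leaf sibling subtree whose vector is $(r_i,1)$; a short induction using $v_1 = r_i u_1 + u_2$ at that step shows that the top-left entry gets multiplied by $r_i$. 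Hence $a = r_1 r_2 \cdots r_d$, where $d$ is the depth of $\ell$. If the $i$-th sibling subtree has $n_i$ leaves then $r_i \le g(n_i)$ and $|T| - 1 = \sum_{i=1}^d n_i$.

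Combining the two parts,
\[
\bar\lambda_P = a^{1/\sum n_i} \le \Bigl(\prod_{i=1}^d g(n_i)\Bigr)^{1/\sum n_i} < \Bigl(\prod_{i=1}^d \lambda^{n_i}\Bigr)^{1/\sum n_i} = \lambda,
\]
where the strict inequality uses $g(n_i) < \lambda^{n_i}$ for each $i$ (and $d\ge 1$ since $|T|\ge 2$). The delicate point is the structural step: one must recognize that the invariant $v_2 = 1$ for fixed-leaf subtrees is what forces both the zero in the lower-left of $M(P)$ and the clean product formula $a = \prod r_i$. Once that is in place, the strict separation between $\bar\lambda_P$ and $\lambda$ is simply a consequence of the ``$+1$'' perturbation in the recursion, passed through Fekete.
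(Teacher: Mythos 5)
Your argument is correct, but it proceeds quite differently from the paper's. The paper's proof is by contradiction through Theorem \ref{thm:pattern}: it assumes some pattern attains $\lambda=\sup_P\bar\lambda_P$, takes one with the fewest leaves, uses the composition inequality $\bar\lambda_P\le\max\{\bar\lambda_{P_1},\bar\lambda_{P_2}\}$ to force the marked leaf to be a child of the root, and then replaces the sibling subtree $T'$ (with associated vector $(a,1)$) by two copies of itself to obtain a pattern of strictly larger rate --- a contradiction. You instead give a direct, quantitative bound: the invariant $v_2=1$ collapses the problem to the scalar recursion $g(n)=\max_{m}\bigl(g(m)g(n-m)+1\bigr)$, whose strict supermultiplicativity gives $g(n)<\lambda^n$ for all $n$ via Fekete's lemma; the product formula $a=\prod_i r_i\le\prod_i g(n_i)$ along the path from $\ell$ to the root then yields $\bar\lambda_P<\lambda$ for every pattern at once. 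Both proofs rest on the same structural facts (the lower-left entry of $M(P)$ vanishes and the lower-right entry is $1$, so the rate is governed by the top-left entry $a$), but yours is self-contained --- it does not invoke Theorem \ref{thm:pattern} to identify $\lambda$ with $\sup_P\bar\lambda_P$, only Fekete's lemma to identify $\lambda$ with $\sup_n\sqrt[n]{g(n)}$ (an identity the paper records only as a remark after its proof) --- and it gives an explicit upper bound on $\bar\lambda_P$ in terms of the sizes of the sibling subtrees, whereas the paper's minimal-counterexample argument is shorter but purely qualitative. The items you leave as ``short inductions'' (the upper-triangular shape of $M(P)$ and $a=\prod_i r_i$) are at the same level of detail that the paper itself leaves to the reader.
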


For this system, the value of $g(n)$ can be found in the vectors associated with the perfect binary trees (for $n$ being a power of $2$), which cannot be generated by any linear pattern.
Actually the system in the above theorem was studied in a different formulation (see \cite{de2012maximum}) and the growth rate was shown to be
\[
\lambda = \exp(\sum_{i\ge 1} \frac{1}{2^i} \log(1+\frac{1}{a_i^2})) = 1.502836801\dots,
\]
where $a_n$ is the sequence with $a_0=1$ and $a_{k+1}=1+a_k^2$ for $k\ge 0$.

This constant has also been studied as the rates of quadratic recurrences and $a_n=g(2^n)$ is the number of binary trees of heights at most $n+1$ (Sequence $A003095$, The On-Line Encyclopedia of Integer Sequences). For more information on this and other sequences of the type, see \cite{aho1973some}.

Suppose the coefficients of $*$ and the entries of $s$ in a system are all integers, the entries of the matrix $M=M(P)$ for any pattern $P$ are also integers as one can see from the construction of $M$ (described after Proposition \ref{prop:lin-relation}). It follows that the spectral radius of $M$ is algebraic.
This means the growth rate is algebraic whenever a pattern recognizes it. Since the growth rate $\lambda$ in Theorem \ref{thm:no-pattern-ex} seems to be not an algebraic number, it suggests the following question of the other direction.

\begin{question}
Suppose both the coefficients of $*$ and the entries of $s$ are integers. Is it true that: If $\lambda$ is algebraic, then there exists a pattern $P$ such that $\bar\lambda_P = \lambda$?
\end{question}

It makes sense to give an example where a pattern recognizes the growth rate, and hence, the growth rate is algebraic.

\begin{theorem}
\label{thm:fibo}
If $s=(1,1)$ and
\[
x*y = (x_1 y_2 + x_2 y_1, x_1 y_2),
\]
then the growth rate $\lambda$ is the golden ratio $\phi$, which is recognized by a pattern.
In particular, $g_1(n)=F_{n+1}$ and $g_2(n)=F_n$, where $F_n$ is the Fibonacci sequence with $F_1=F_2=1$.
\end{theorem}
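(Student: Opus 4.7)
The plan is to compute $g_1(n)$ and $g_2(n)$ explicitly as Fibonacci numbers and then exhibit a pattern whose rate equals the resulting growth rate $\phi$.

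First I would construct the pattern. Let $P=(T,\ell)$ where $T$ is the two-leaf tree and $\ell$ is its left leaf. Substituting a variable $u$ at $\ell$ and leaving $s=(1,1)$ at the other leaf gives $u*s=(u_1+u_2,\,u_1)$, so $M(P)=\bigl(\begin{smallmatrix}1&1\\1&0\end{smallmatrix}\bigr)$ is the Fibonacci matrix, whose spectral radius is $\phi$. Since $|T|-1=1$, Proposition~\ref{prop:spectral-radius} yields $\bar\lambda_P=\phi$. The tree $T^t$ is the left comb on $t+1$ leaves, and a one-line induction (using $(F_{t+2},F_{t+1})*s=(F_{t+3},F_{t+2})$, starting from $s=(F_2,F_1)$) shows that its associated vector is $(F_{t+2},F_{t+1})$. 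This already gives the lower bounds $g_1(n)\ge F_{n+1}$ and $g_2(n)\ge F_n$.

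The main step is the matching upper bound, which I would prove by strong induction on $n$, simultaneously establishing $g_1(n)\le F_{n+1}$ and $g_2(n)\le F_n$. Any $v\in A_n$ factors as $v=x*y$ with $x\in A_m$, $y\in A_{n-m}$, $1\le m\le n-1$, and the inductive hypothesis bounds each entry of $x,y$ by the corresponding Fibonacci number. The two inequalities then reduce to
\[
F_{m+1}F_{n-m}\le F_n,\qquad F_{m+1}F_{n-m}+F_m F_{n-m+1}\le F_{n+1},
\]
both consequences of the convolution identity $F_{p+q+1}=F_{p+1}F_{q+1}+F_p F_q$: the choice $(p,q)=(m,n-m-1)$ yields $F_n-F_{m+1}F_{n-m}=F_m F_{n-m-1}\ge 0$, and a short manipulation with $(p,q)=(m,n-m)$ yields $F_{n+1}-F_{m+1}F_{n-m}-F_m F_{n-m+1}=F_{m-1}F_{n-m-1}\ge 0$ (using $F_0=0$).

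Combining the two directions gives $g(n)=g_1(n)=F_{n+1}$ and $g_2(n)=F_n$, so $\lambda=\lim_n\sqrt[n]{F_{n+1}}=\phi=\bar\lambda_P$ and the pattern $P$ recognizes the growth rate. The conceptual content is light; the only real obstacle will be keeping the Fibonacci index bookkeeping straight and checking that the boundary cases $m=1$ and $m=n-1$ (where some indices become $0$) cause no trouble.
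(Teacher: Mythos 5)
Your proposal is correct, and its overall architecture coincides with the paper's: the lower bound comes from the left-comb pattern on two leaves (whose matrix is the Fibonacci matrix with spectral radius $\phi$), and the upper bound is a strong induction over the root split $v=x*y$ with $x\in A_m$, $y\in A_{n-m}$, reducing to the two Fibonacci inequalities $F_{m+1}F_{n-m}\le F_n$ and $F_{m+1}F_{n-m}+F_mF_{n-m+1}\le F_{n+1}$. The one genuine difference is how those inequalities are established. The paper proves the equivalent statements $F_pF_q\le F_{p+q-1}$ and $F_pF_{q-1}+F_{p-1}F_q\le F_{p+q-1}$ by a two-variable induction with several boundary cases. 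You instead derive them from the convolution identity $F_{p+q+1}=F_{p+1}F_{q+1}+F_pF_q$, obtaining the exact remainders $F_n-F_{m+1}F_{n-m}=F_mF_{n-m-1}$ and $F_{n+1}-F_{m+1}F_{n-m}-F_mF_{n-m+1}=F_{m-1}F_{n-m-1}$ (both checks are correct, and the boundary cases $m=1$, $m=n-1$ are absorbed by $F_0=0$). This is shorter, avoids the induction entirely, and as a bonus identifies exactly when the bound is tight, namely when $m\in\{1,n-1\}$ (for the first entry), consistent with the extremal trees being combs; what it gives up is only the paper's self-contained elementary flavor, since you import a standard identity rather than proving the needed inequalities from scratch.
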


In this system, the value of $g(n)$ can be found in the vectors associated with the binary trees where the right branch of every non-leaf vertex is just a leaf. The tree of the pattern has only two leaves with the marked leaf on the left. The proof uses some inequalities involving the elements of the Fibonacci sequence that are interesting on their own.

The readers may notice that although the two examples in Theorem \ref{thm:no-pattern-ex} and Theorem \ref{thm:fibo} just slightly differ from each other, the growth rates and the patterns are quite different in nature.

Further discussions will be given in Section \ref{sec:discussions}. In particular, we explain why we require the signs of the coefficients and the entries. In short, it becomes too trivial if we restrict the requirement of $*$, and the convergence does not always hold if the requirements of $*$ and $s$ are relaxed.
We also sketch how the terms in this work look like in the problem of \cite{rote2019maximum}.   Another example is also introduced as an open problem there. Discussions of the possibility using other norms than the maximum norm of $g(n)$ and multilinear maps rather than bilinear maps are also given.

We give the proofs of Theorems \ref{thm:pattern}, \ref{thm:no-pattern-ex} and \ref{thm:fibo} in Sections \ref{sec:pattern}, \ref{sec:no-pattern-ex} and \ref{sec:fibo}, respectively. Section \ref{sec:lemmas} proves the lemmas used in Section \ref{sec:pattern}. The proofs of Propositions \ref{prop:lin-relation}, \ref{prop:spectral-radius} and \ref{prop:lower-bound} can be reproduced from the explanation given after each of them, therefore, they are omitted.
\section{Proof of Theorem \ref{thm:pattern}}
\label{sec:pattern}

Consider the \emph{dependency graph} that is a directed graph whose vertices are $1,\dots,d$; there is a directed edge from $k$ to $i$ if and only if there exists some $j$ such that $c_{i,j}^{(k)}$ or $c_{j,i}^{(k)}$ is positive, where $c_{i,j}^{(k)}$ are the coefficients of $*$.
We say $k$ depends on $i$ for such an edge $ki$. In some cases, we need to say specifically that $k$ \emph{left depends} (resp. \emph{right depends}) on $i$ if $c_{i,j}^{(k)}$ (resp. $c_{j,i}^{(k)}$) is positive.

The dependency graph can be partitioned into strongly connected components, which can be partially ordered. For different components $C_1, C_2$, we say $C_1$ is greater than $C_2$ if either there is a directed edge $ij$ for $i\in C_1, j\in C_2$, or there exists another component $C_3$ so that $C_1 > C_3 > C_2$.

For a given component $C$, consider the subgraph induced by all the components smaller than or equal to C. We define the $C$-subsystem as the system $(*', s')$ induced by the dimensions corresponding to the vertices in the subgraph. In other words, let $d'$ be the number of vertices $d'$ in the subgraph, then $*'\in \mathbb R^{d'}\times\mathbb R^{d'}\to\mathbb R^{d'}$ and $s'\in \mathbb R^{d'}$ have correspondingly the coefficients and the entries from the original system. Note that if $i$ is a dimension in the $C$-subsystem, then the $i$-th entry of the resulting vector for every combination is the same for both the $C$-subsystem and the original system.

To illustrate the definitions, consider the case $s=(1,1)$ and $(x*y)=(x_1 y_2 + x_2 y_1, x_2 y_2)$.

The dependency graph contains two vertices $1,2$ and three edges: a loop at $1$, a loop at $2$ and an edge from $1$ to $2$. There are two strongly connected components $C_1$ and $C_2$, each $C_i$ containing only vertex $i$. We also have the order $C_1>C_2$.
While the $C_1$-subsystem is actually the original system, the $C_2$-subsystem is in a $1$-dimensional space where $s'$ is the unit scalar and $*'$ is the usual product of numbers.

This example is actually interesting on its own since every combination gives the same result $(n,1)$. Together with the two examples in Theorem \ref{thm:no-pattern-ex} and Theorem \ref{thm:fibo}, these three examples are very similar by notation but so different in nature.

Before proving the theorem, we first give some useful lemmas, which will be proved later in Section \ref{sec:lemmas}.
\begin{lemma}
\label{lem:bounded-ratio-adj}
For every $i$, the value $g_i(n)$ is at least a constant times $g_i(n+1)$.
\end{lemma}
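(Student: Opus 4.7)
My plan is to compare $g_i(n+1)$ with $g_i(n)$ by contracting a ``cherry'' (an internal vertex both of whose children are leaves) in a maximizing tree, and then invoking the linear-pattern framework of Proposition~\ref{prop:lin-relation}. The intended constant is $c=1/\alpha$, where
\[
\alpha=\max_{1\le j\le d}\frac{(s*s)_j}{s_j},
\]
a finite positive quantity depending only on the system, since every $s_j$ is positive and every $(s*s)_j$ is nonnegative. This choice is dictated by the entrywise inequality $(s*s)_j\le\alpha s_j$, which is what the cherry-contraction argument will turn into the desired inequality on $g_i$.

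To execute the step $n+1\to n$, I would fix a tree $T$ with $n+1$ leaves whose associated vector $v$ satisfies $v_i=g_i(n+1)$. Since $T$ has at least two leaves, it admits a cherry; take a deepest internal vertex $p$, whose two children are then both leaves. Let $T'$ be the tree obtained by collapsing the cherry at $p$ into a single leaf $\ell'$; then $T'$ has exactly $n$ leaves, and its associated vector $v'$ lies in $A_n$. When $n\ge 2$ the pair $(T',\ell')$ is a linear pattern with associated matrix $M=M((T',\ell'))$. Proposition~\ref{prop:lin-relation} then gives $v'=Ms$ (plugging $s$ into the marked leaf) and $v=M(s*s)$ (plugging the cherry's value $s*s$ into the marked leaf), so, provided the entries of $M$ are nonnegative, the bound $(s*s)_j\le\alpha s_j$ yields
\[
v_i=\sum_j M_{ij}(s*s)_j\le\alpha\sum_j M_{ij}s_j=\alpha v'_i.
\]
Thus $g_i(n)\ge v'_i\ge g_i(n+1)/\alpha$. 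The base case $n=1$ reduces to $g_i(2)=(s*s)_i\le\alpha s_i=\alpha g_i(1)$, which is immediate from the definition of $\alpha$.

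The only substantive point, and the main (minor) obstacle to writing the argument cleanly, is the nonnegativity of $M$. Following the bottom-up construction sketched after Proposition~\ref{prop:lin-relation}, $M$ is a product of matrices of the form associated with $*^L_u$ and $*^R_u$ for fixed vectors $u$ built up from $s$; each such factor has nonnegative entries because the coefficients $c_{i,j}^{(k)}$ are nonnegative and the relevant $u$ have nonnegative entries (inductively, since $s$ has positive entries and $*$ preserves nonnegativity). This validates the displayed inequality and gives the lemma with $c=1/\alpha$, a constant independent of both $i$ and $n$.
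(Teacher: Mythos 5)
Your proof is correct and follows essentially the same route as the paper: both contract a two-leaf subtree (cherry) of a maximizing tree, write $v=M(s*s)$ and $v'=Ms$ via the pattern matrix, and bound the ratio using the entrywise comparison of $s*s$ with $s$ (your $\alpha$ is just a sharper version of the paper's constant $g(2)/\min_k s_k$). Your explicit treatment of the nonnegativity of $M$ and of the base case $n=1$ are details the paper leaves implicit, but they do not change the argument.
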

\begin{lemma}
\label{lem:comparing-lim-inf-sup}
If $i,j$ are in the same component, then
\begin{align*}
\liminf_{n\to\infty} \sqrt[n]{g_i(n)} &= \liminf_{n\to\infty} \sqrt[n]{g_j(n)},\\
\limsup_{n\to\infty} \sqrt[n]{g_i(n)} &= \limsup_{n\to\infty} \sqrt[n]{g_j(n)}.
\end{align*}

If $i\in C_1$, $j\in C_2$ and $C_1< C_2$ then
\begin{align*}
\liminf_{n\to\infty} \sqrt[n]{g_i(n)} &\le \liminf_{n\to\infty} \sqrt[n]{g_j(n)},\\
\limsup_{n\to\infty} \sqrt[n]{g_i(n)} &\le \limsup_{n\to\infty} \sqrt[n]{g_j(n)}.
\end{align*}
\end{lemma}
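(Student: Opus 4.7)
The plan is to reduce the whole lemma to a single elementary shift inequality coming from edges of the dependency graph, and then to convert that shift inequality into a comparison of exponential rates via logarithms. The key observation is that if there is a directed edge $k\to i$ in the dependency graph, then there exists a constant $c>0$ with $g_k(n+1)\ge c\,g_i(n)$ for every $n\ge 1$. Indeed, the edge means some coefficient $c_{i,j'}^{(k)}$ or $c_{j',i}^{(k)}$ is positive; after swapping left and right if necessary, picking $u\in A_n$ with $u_i = g_i(n)$ and forming $v = u*s\in A_{n+1}$ yields $v_k \ge c_{i,j'}^{(k)}\,s_{j'}\, g_i(n)$. Iterating along a directed path of length $\ell$ from $k$ to $i$ then gives a constant $C>0$ with $g_k(n+\ell)\ge C\,g_i(n)$ for all $n\ge 1$.

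To convert this shift inequality into a comparison of limit roots, I would set $\tilde g_i(n):=\log g_i(n)\in[-\infty,\infty)$, so that by monotone continuity of $\exp$ the claim reduces to inequalities of $\liminf$ and $\limsup$ of $\tilde g_\cdot(n)/n$. Taking the logarithm of the shift inequality and dividing by $m = n+\ell$,
\[
\frac{\tilde g_k(m)}{m}\ \ge\ \frac{\log C}{m}\ +\ \frac{\tilde g_i(m-\ell)}{m-\ell}\cdot\frac{m-\ell}{m}.
\]
The first summand vanishes, and multiplying the sequence $\tilde g_i(m-\ell)/(m-\ell)$ by the factor $(m-\ell)/m\to 1$ leaves its $\liminf$ and $\limsup$ unchanged. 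Passing to $\liminf$ (resp.\ $\limsup$) on both sides and exponentiating then gives
\[
\liminf_{n\to\infty}\sqrt[n]{g_k(n)}\ \ge\ \liminf_{n\to\infty}\sqrt[n]{g_i(n)},
\]
and the analogous inequality for $\limsup$, whenever there is a directed path from $k$ to $i$.

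With this tool in hand, both conclusions of the lemma are immediate. If $i, j$ lie in the same strongly connected component, paths exist in both directions, so the two one-sided inequalities force equality. If $C_1 < C_2$, unwinding the partial order via a chain $C_2 = D_0 > D_1 > \cdots > D_s = C_1$ of direct-edge comparisons produces a directed path from some vertex of $C_2$ to some vertex of $C_1$; combined with the same-component equalities inside $C_1$ and $C_2$, this extends the inequality to arbitrary $j\in C_2$ and $i\in C_1$. The only delicate point is the claim that the factor $(m-\ell)/m\to 1$ may be discarded when taking $\liminf$ or $\limsup$: since $\tilde g_i(n)/n$ can a priori be negative and even unbounded below (for instance when $g_i(n)$ vanishes or decays exponentially fast), one cannot simply appeal to boundedness. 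A short case split on the sign of the relevant $\liminf$ or $\limsup$ handles it, using that for $\tilde g_i(n)/n \le 0$ the factor $(m-\ell)/m<1$ only makes the product larger, and the desired inequality is vacuous (the right-hand side is $0$) whenever that quantity is $-\infty$.
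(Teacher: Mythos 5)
Your proof is correct, and its combinatorial core coincides with the paper's: an edge $k\to i$ gives $g_k(n+1)\ge c\,g_i(n)$ by grafting a single leaf onto an optimal combination, iterating along a directed path gives $g_k(n+\ell)\ge C\,g_i(n)$, and the two halves of the lemma then reduce to the existence of suitable directed paths (both directions inside one strongly connected component; one direction across comparable components). Where you genuinely diverge is in how the index shift $\ell$ is disposed of. The paper applies Corollary \ref{cor:any-distance} (a consequence of Lemma \ref{lem:bounded-ratio-adj}, giving $g_k(n)\ge\gamma\,g_k(n+\ell)$) to turn the shifted inequality into the unshifted pointwise bound $g_k(n)\ge\gamma C\,g_i(n)$, after which the $\liminf$/$\limsup$ comparison is immediate. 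You instead keep the shift and absorb it in the limit via logarithms, and you correctly flag the one delicate point: $\log g_i(n)/n$ can be negative or $-\infty$ (indeed $g_i(n)=0$ for $n\ge2$ is possible when all $c^{(i)}_{j,k}$ vanish), so the factor $(m-\ell)/m\to1$ cannot be discarded by a boundedness argument alone; your sign case split and the observation that the claim is vacuous when the right-hand side is $0$ close this. Both routes are sound. Yours has the merit of making this lemma logically independent of Lemma \ref{lem:bounded-ratio-adj}; the paper's version yields the stronger same-index inequality $g_j(n)\ge\const\cdot g_i(n)$ (Equation \eqref{eq:graph->value-comparision}), which is what is actually reused elsewhere, e.g.\ in the proof of Lemma \ref{lem:g_i(n)-large}.
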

\begin{lemma}
\label{lem:same-comp-make-pattern}
  Given a pattern $P=(T,\ell)$ with the associated matrix $M$. Let $i,j$ be two vertices of the same component, then there exists a pattern $P'=(T',\ell')$ with the difference in the number of leaves $|T'|-|T|$ bounded and $\lambda_{P'}$ at least a constant times $M_{i,j}$.
\end{lemma}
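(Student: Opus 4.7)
The plan is to obtain $P'$ by substituting into the marked leaf of $P$ a short pattern $Q$ whose matrix $N=M(Q)$ satisfies $N_{j,i}\ge c_0$ for some positive system constant $c_0$. The composed pattern will have matrix $M(P')=MN$ with diagonal entry $(MN)_{i,i}\ge M_{i,j}N_{j,i}\ge c_0 M_{i,j}$. Since $MN$ is nonnegative, $\lambda_{P'}=\rho(MN)\ge(MN)_{i,i}$, because $((MN)^k)_{i,i}\ge(MN)_{i,i}^k$ from the constant walk at $i$, and Gelfand's formula then yields the spectral-radius bound. This reduces the lemma to constructing such a $Q$ of bounded size.

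First I would verify the composition identity for patterns. If $T''$ is formed from $T$ by replacing the marked leaf $\ell$ with the whole tree $T'$, and the marked leaf $\ell'$ of $T'$ is kept as the new marked leaf, then evaluating $T''$ with $u$ at $\ell'$ first produces $Nu$ at the former position of $\ell$, and then processing $T$ with this value as input at $\ell$ yields $M(Nu)=MNu$ at the root. So $M(P')=M(P)\,M(Q)$ and $|T''|=|T|+|T'|-1$.

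Next I would build short atomic patterns realizing each edge of the dependency graph. If $k\to\ell$ is present because $c_{\ell,m}^{(k)}>0$ (left dependence), the two-leaf pattern with the left leaf marked and $s$ on the right has matrix whose $(k,\ell)$-entry equals $\sum_b c_{\ell,b}^{(k)}s_b\ge c_{\ell,m}^{(k)}s_m>0$; right dependence is symmetric. Taking the minimum of these positive values over the finitely many edges gives a system constant $\epsilon_0>0$. Since $i,j$ lie in the same strongly connected component, there is a directed path $j=k_0\to k_1\to\cdots\to k_p=i$ in the dependency graph with $p\le d$. Composing the corresponding $p$ atomic patterns produces $Q$ with $|Q|=p+1\le d+1$, and an entry-by-entry lower bound along the chosen path through the matrix product gives $N_{j,i}\ge\epsilon_0^p\ge\epsilon_0^d$. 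Thus $P'$, obtained by composing $P$ with $Q$, satisfies $|T'|-|T|=|Q|-1\le d$ and $\lambda_{P'}\ge\epsilon_0^d M_{i,j}$, so the promised constant may be taken to be $\epsilon_0^d$.

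I do not expect a serious obstacle beyond bookkeeping. The only delicate point is the degenerate case $i=j$ in a singleton component with no self-loop at $i$ in the dependency graph: no cycle through $i$ exists, so no pattern $Q$ can have $N_{i,i}>0$; but in that case $M_{i,j}=M_{i,i}=0$ for every pattern as well (a positive $M_{i,i}$ would, tracing the bottom-up construction after Proposition \ref{prop:lin-relation}, yield a cycle through $i$), so the inequality is trivially satisfied by any $P'$.
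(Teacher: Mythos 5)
Your proof is correct and takes essentially the same approach as the paper: both arguments realize a dependency-graph path from $j$ to $i$ by a bounded chain of two-leaf gadgets so that $M_{i,j}$ becomes a factor of a diagonal entry of the new pattern's matrix, which bounds the spectral radius from below. The only cosmetic difference is that you splice the gadget chain in at the marked leaf (giving $MN$ and the $(i,i)$ entry), whereas the paper wraps it around the root (giving $NM$ and the $(j,j)$ entry, keeping the marked leaf unchanged); the paper also disposes of the case $i=j$ more simply by taking $P'=P$, since $\lambda_P\ge M_{i,i}$ holds for nonnegative matrices regardless of whether $M_{i,i}$ vanishes.
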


\begin{lemma}
\label{lem:entries-not-too-big}
If $M=M(P)$ is the matrix associated with a pattern $P=(T,\ell)$ with $T$ having $n$ leaves, then for every $i,j$, the value $M_{i,j}$ is at most a constant times $g_i(n)$.
\end{lemma}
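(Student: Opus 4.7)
The plan is to exploit nonnegativity of $M$ together with the fact that the standard instantiation of $T$ (all leaves equal to $s$, including the marked one) lies in $A_n$.

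First I would observe that $M$ has nonnegative entries. This follows from the bottom-up construction of $M$ described after Proposition~\ref{prop:lin-relation}: the two linear maps $*^L_s$ and $*^R_s$ have matrices whose $(k,i)$ entries are $\sum_j c_{i,j}^{(k)} s_j$ and $\sum_j c_{j,i}^{(k)} s_j$ respectively, both nonnegative since the coefficients of $*$ are nonnegative and $s$ has nonnegative entries. The matrix $M$ associated with $P$ is a product of such matrices, hence nonnegative.

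Next, let $c = \min_k s_k$, which is strictly positive by hypothesis. Then for each basis vector $e_j$ we have the componentwise inequality $s \ge c\, e_j$, because $s_j \ge c$ while the other entries of $c\, e_j$ vanish. Since $M$ is nonnegative, the map $u \mapsto Mu$ is monotone in the componentwise order, so $Ms \ge c\, M e_j$, and in particular
\[
(Ms)_i \ge c\, (M e_j)_i = c\, M_{i,j}.
\]

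Finally, I would identify $Ms$ as the vector associated with $T$ in the usual sense (that is, with every leaf, including $\ell$, assigned $s$): this is just Proposition~\ref{prop:lin-relation} applied with $u = s$. Hence $Ms \in A_n$, so $(Ms)_i \le g_i(n)$ by definition of $g_i$. Combining with the previous inequality gives $M_{i,j} \le g_i(n)/c$, proving the lemma with constant $1/c = 1/\min_k s_k$. There is no real obstacle here; the only thing to be careful about is invoking the strict positivity of the entries of $s$ to ensure the constant $c$ is positive and independent of $P$.
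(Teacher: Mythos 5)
Your proof is correct and essentially identical to the paper's: both bound $M_{i,j}\min_k s_k \le (Ms)_i \le g_i(n)$ using the nonnegativity of $M$ and the positivity of $s$. Your phrasing via monotonicity of $u\mapsto Mu$ applied to $s\ge c\,e_j$ is just a slightly more elaborate way of dropping terms from the sum $\sum_j M_{i,j}s_j$.
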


\begin{lemma}
\label{lem:g_i(n)-large}
If a component $C$ is greater than every other component, then $g_i(n)$ is at least a constant times $g(n)$ for every $i\in C$.
\end{lemma}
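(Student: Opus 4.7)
The plan is to exploit the fact that being in the greatest component makes $i$ able to ``reach'' every other coordinate through a chain of dependencies, so that each step of such a chain costs only one extra leaf, and then to use Lemma~\ref{lem:bounded-ratio-adj} to absorb the extra leaves into a multiplicative constant.

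The first step I would carry out is a reachability claim: for every $i\in C$ and every $j\in\{1,\dots,d\}$ there is a directed path in the dependency graph from $i$ to $j$ of length at most $d$. If $j\in C$ this follows from strong connectedness of $C$. Otherwise $j$ lies in some other component $C'$ and $C>C'$; unfolding the recursive definition of the order on components yields a chain $C=C_0>C_1>\cdots>C_m=C'$ in which consecutive components are linked by a direct edge, and concatenating these inter-component edges with paths inside each component gives the required path from $i$ to $j$.

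Next, I would set up a one-step propagation inequality. If $k$ depends on $l$ then some coefficient $c_{l,r}^{(k)}$ or $c_{r,l}^{(k)}$ is strictly positive, so for any $v\in A_n$ the vector $v*s$ (in the left-depends case) or $s*v$ (in the right-depends case) lies in $A_{n+1}$ and has $k$-th entry at least $\alpha v_l$ for some constant $\alpha>0$ depending only on the system. Taking $v$ to be a maximiser of the $l$-th entry on $A_n$ gives
\[
g_k(n+1)\ge \alpha g_l(n).
\]
Iterating this inequality along the path $i=i_0\to i_1\to\cdots\to i_k=j$ (with $k\le d$) yields $g_i(n+k)\ge \alpha^k g_j(n)$.

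To finish, I would invoke Lemma~\ref{lem:bounded-ratio-adj} to obtain a constant $\beta>0$ with $g_i(n)\ge\beta g_i(n+1)$ for every $n$, hence $g_i(n)\ge\beta^k g_i(n+k)\ge(\alpha\beta)^k g_j(n)\ge(\alpha\beta)^d g_j(n)$. Since $j$ is arbitrary and $g(n)=\max_j g_j(n)$, this gives $g_i(n)\ge(\alpha\beta)^d g(n)$, which is the desired bound. The only mildly delicate point to watch will be the left/right dependency case split in the one-step propagation, but the two cases are entirely symmetric and are handled uniformly by choosing the side on which to multiply by $s$.
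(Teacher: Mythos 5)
Your proposal is correct and follows essentially the same route as the paper: the paper's proof likewise reduces the claim to the existence of a path from $i\in C$ to every $j$, applies the one-step propagation inequality $g_k(n+1)\ge\alpha g_l(n)$ iterated along that path (this is Equation \eqref{eq:graph->value-comparision}, established in the proof of Lemma \ref{lem:comparing-lim-inf-sup}), and absorbs the extra leaves via Corollary \ref{cor:any-distance} of Lemma \ref{lem:bounded-ratio-adj} before taking the maximum over $j$. The only cosmetic difference is that you re-derive the propagation inequality from scratch instead of citing it, and you should normalize $\alpha\beta\le 1$ so that $(\alpha\beta)^k\ge(\alpha\beta)^d$ holds, but this is immediate.
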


\begin{lemma}
\label{lem:tree-division}
For every binary tree with $n>1$ leaves, there is a subtree with $m$ leaves such that $n/3 \le m \le 2n/3$.
\end{lemma}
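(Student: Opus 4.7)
The plan is to perform a greedy descent from the root, at each step moving to the child with the larger subtree, and to stop as soon as the current subtree has at most $2n/3$ leaves. Writing $\ell(v)$ for the number of leaves in the subtree rooted at $v$, I would define a walk $v_0,v_1,v_2,\dots$ by taking $v_0$ to be the root (so $\ell(v_0)=n$) and $v_{k+1}$ to be the child of $v_k$ with the larger value of $\ell$, ties broken arbitrarily, continuing only while $\ell(v_k)>2n/3$.

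The walk terminates: since each internal node has two children whose subtree sizes sum to the parent's size, $\ell$ strictly decreases along the path, and any leaf has $\ell=1\le 2n/3$ for $n\ge 2$, forcing the stopping condition by the time a leaf is reached. Let $v_t$ be the stopping node, so $\ell(v_t)\le 2n/3$. From $n>2n/3$ we get $t\ge 1$, hence $v_t$ has a parent $v_{t-1}$ with $\ell(v_{t-1})>2n/3$. Because $v_t$ is the larger of the two children of $v_{t-1}$, we have $\ell(v_t)\ge \ell(v_{t-1})/2 > n/3$. Combined with the stopping rule, the subtree at $v_t$ has $m=\ell(v_t)$ leaves with $n/3 < m\le 2n/3$, which is the desired inequality.

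There is no real obstacle here beyond light bookkeeping; the only minor point is the base case $n=2$, where the walk descends immediately to a leaf with $m=1$, and indeed $2/3\le 1\le 4/3$.
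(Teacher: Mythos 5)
Your argument is correct: the greedy descent to the larger child guarantees $\ell(v_t)\ge \ell(v_{t-1})/2 > n/3$ at the first node where $\ell(v_t)\le 2n/3$, which is exactly what is needed. The paper itself omits the proof, calling the fact well known and leaving the verification to the reader; your write-up is the standard argument it alludes to.
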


We are now ready to prove Theorem \ref{thm:pattern}.

For a component $C$, let $\lambda_P^{C}$ and $\bar\lambda_P^{C}$ denote the rates with respect to the $C$-subsystem.

Take any component $C$, we investigate the $C$-subsystem. This restriction actually does not reduce the generality but allows us to conclude on the convergence of $\sqrt[n]{g_i(n)}$ for every $i$, as we will show later.

Suppose $C$ is a component $C_0$ such that
\begin{equation}
\label{eq:requirement-of-C_0}
\limsup_{n\to\infty} \sqrt[n]{g_i(n)} > \limsup_{n\to\infty} \sqrt[n]{g_j(n)}
\end{equation}
for every $C'<C_0, i\in C_0, j\in C'$.

It can be seen that $\liminf_{n\to\infty} \sqrt[n]{g_i(n)} \ge \sup_P \bar\lambda_P^{C_0}$ for every $i\in C_0$ by Proposition \ref{prop:lower-bound} and Lemma \ref{lem:g_i(n)-large}. With the condition of $C_0$, we prove the other direction: For every $i\in C_0$,
\[
\limsup_{n\to\infty} \sqrt[n]{g_i(n)} \le \sup_P \bar\lambda_P^{C_0}.
\]

For a vertex $k$, denote $\theta_k = \limsup_{n\to\infty} \sqrt[n]{g_k(n)}$.
By definition, for every $\epsilon>0$, there exists an $n_{k,\epsilon}$ such that for every $n>n_{k,\epsilon}$, we have $g_k(n) < (\theta_k+\epsilon)^n $. Also, for every $N$, there exists $n>N$ such that $g_k(n)>(\theta_k-\epsilon)^n$.

Let $i$ be a vertex in $C_0$ and denote $\theta=\theta_i$.

Fix $\epsilon$, choose $n_\epsilon=\max_k n_{k,\epsilon}$. Let $N=3n_\epsilon$ and take any $n>N$ such that $g_i(n) > (\theta-\epsilon)^n$.

Let $T$ be a tree so that the $i$-th entry of the associated vector is $g_i(n)$. Take a subtree $T_2$ with $m$ leaves so that $n/3\le m\le 2n/3$ (by Lemma \ref{lem:tree-division}), and combine any leaf $\ell_2$ among the $m$ leaves with $T_2$ to obtain the pattern $P_2 = (T_2,\ell_2)$. Denote by $\ell_1$ the root of $T_2$, and by $T_1$ the tree obtained from $T$ after contracting $T_2$ to $\ell_1$. We have another pattern $P_1=(T_1,\ell_1)$. Also, consider the pattern $P=(T,\ell)$ for $\ell=\ell_2$.

Let the matrices for $P,P_1,P_2$ be $M,A,B$, respectively. Clearly,
\[
M=AB.
\]

Since $g_i(n)=\sum_j M_{i,j} s_j$, there exists some $j$ and a constant $\alpha>0$ such that 
\[
M_{i,j} \ge \alpha g_i(n).
\]

Since $M_{i,j} = \sum_k A_{i,k} B_{k,j}$, there exists some $k$ and a constant $\beta>0$ such that 
\[
A_{i,k} B_{k,j} \ge \beta M_{i,j}\ge \beta\alpha g_i(n) > \beta\alpha (\theta-\epsilon)^n.
\]

Denote $\theta'=\theta_k$. By Lemma \ref{lem:entries-not-too-big}, and by the definition of $\theta'$ with $m>n_\epsilon$, there exists a constant $\gamma>0$ such that 
\[
B_{k,j} \le \gamma g_k(m) < \gamma (\theta'+\epsilon)^m.
\]

It means 
\[
A_{i,k} > \frac{\alpha\beta}{\gamma} \frac{(\theta-\epsilon)^n}{(\theta'+\epsilon)^m}.
\]

Note that $\frac{\alpha\beta}{\gamma}$ is a constant and 
\begin{align*}
\frac{(\theta-\epsilon)^n}{(\theta'+\epsilon)^m} &= (\frac{\theta-\epsilon}{\theta'+\epsilon})^m (\theta-\epsilon)^{n-m}\\
 &\ge (\sqrt{\frac{\theta-\epsilon}{\theta'+\epsilon}})^{n-m} (\theta-\epsilon)^{n-m}\\
&= (\sqrt{\frac{\theta-\epsilon}{\theta'+\epsilon}}\frac{\theta-\epsilon}{\theta+\epsilon})^{n-m} (\theta+\epsilon)^{n-m},
\end{align*}
where the inequality step is due to $m\ge (n-m)/2$.

Suppose $k$ is in a smaller component than $C_0$, that is $\theta'<\theta$.

When $\epsilon$ is small and $n$ is large enough, the value of $A_{i,k}$ will not be bounded by a constant times $(\theta+\epsilon)^{n-m}$ due to $(\theta-\epsilon)/(\theta'+\epsilon) > 1$ but $(\theta-\epsilon)/(\theta+\epsilon)$ tending to $1$ when $\epsilon$ tends to $0$. However, $A_{i,k}\le \zeta g_i(n-m+1)\le \zeta \eta g_i(n-m) < \zeta\eta (\theta+\epsilon)^{n-m}$, where $\zeta, \eta$ are the constants obtained respectively from Lemma \ref{lem:entries-not-too-big} and Lemma \ref{lem:bounded-ratio-adj}, a contradiction (note that $T_1$ has $n-m+1$ leaves).

Therefore, $i$ and $k$ are in the same component when $n$ is large enough, which means $B_{k,j}$ is at most a constant times $(\theta+\epsilon)^m$. It follows that $A_{i,k}$ is at least a constant times
\[
\frac{(\theta-\epsilon)^{n}}{(\theta+\epsilon)^m} .
\]

For every $\epsilon'>0$, there exists $\epsilon>0$ such that 
\[
\frac{(\theta-\epsilon)^n}{(\theta+\epsilon)^m} > (\theta-\epsilon')^{n-m}.
\]

By Lemma \ref{lem:same-comp-make-pattern}, the lower bound of $A_{i,k}$ means that for every $\epsilon''>0$, there exists a pattern $P'$ having $\bar\lambda_{P'}^{C_0} > \theta-\epsilon''$ (by setting $\epsilon'$ small enough). In other words,
\[
\limsup_{n\to\infty} \sqrt[n]{g_i(n)} \le \sup_P \bar\lambda_P^{C_0}.
\]

It means $\lim_{n\to\infty} \sqrt[n]{g_i(n)}$ exists for every $i\in C_0$ since the limit superior and the limit inferior are equal.

We have shown that $\sqrt[n]{g_i(n)}$ converges to a limit for every $i$ in a component satisfying the requirement \eqref{eq:requirement-of-C_0}. It remains to consider components $C$ not satisfying the requirement. For such a component $C$, there is a component $C_0 < C$ satisfying that requirement and $\limsup_{n\to\infty} \sqrt[n]{g_i(n)} = \limsup_{n\to\infty} \sqrt[n]{g_k(n)}$ for any $i\in C_0$ and $k\in C$. By Lemma \ref{lem:comparing-lim-inf-sup},

\[
\liminf_{n\to\infty} \sqrt[n]{g_k(n)} \ge \liminf_{n\to\infty} \sqrt[n]{g_i(n)} = \limsup_{n\to\infty} \sqrt[n]{g_i(n)} =\limsup_{n\to\infty} \sqrt[n]{g_k(n)}.
\]

It means $\lim_{n\to\infty} \sqrt[n]{g_k(n)}$ exists because the limit superior and the limit inferior are equal.

The existence of 
\[
\lambda = \lim_{n\to\infty} \sqrt[n]{g(n)} = \max_k \lim_{n\to\infty} \sqrt[n]{g_k(n)}
\]
follows from the existence of $\lim_{n\to\infty} \sqrt[n]{g_k(n)}$ for every $k$. 

This limit $\lambda$ is equal to the supremum of $\bar\lambda_P$ over all patterns $P$ because for $i\in C$ satisfying $\lim_{n\to\infty} \sqrt[n]{g_i(n)} = \lambda$, we have
\[
\sup_P \bar\lambda_P^{C} \le \sup_P \bar\lambda_P \le \lim_{n\to\infty} \sqrt[n]{g(n)} = \lim_{n\to\infty} \sqrt[n]{g_i(n)} = \sup_P \bar\lambda_P^{C}.
\]

\section{Proofs of the lemmas}
\label{sec:lemmas}
\begin{proof}[Proof of Lemma \ref{lem:bounded-ratio-adj}]
Let $T$ be a tree with $n+1$ leaves so that the $i$-th entry of the associated vector is $g_i(n+1)$. Take any subtree $T_0$ with $2$ leaves, and replace it by a leaf, denoted by $\ell$, to obtain a new tree $T'$ with $n$ leaves.

Let $v, v'$ be the vector associated with the trees $T, T'$, respectively.

Let $M$ be the matrix associated with the pattern $(T',\ell)$, that is $v'=Ms$ for the vector $s$ associated with the leaf $\ell$. If the leaf $\ell$ is replaced by the tree $T_0$, we have the relation $v=Mu$ for the vector $u=s*s$ associated with $T_0$.

Since $u_i\le g(2)$ and $s_i\ge \min_k s_k$ for every $i$,
\[
\frac{u_i}{s_i} \le \frac{g(2)}{\min_k s_k}.
\]

Together with $v=Mu$ and $v'=Ms$, we have
\[
\frac{v_i}{v'_i} \le \frac{g(2)}{\min_k s_k}.
\]

The conclusion follows due to $v'_i \le g_i(n)$ and $v_i=g_i(n+1)$.
\end{proof}
\begin{remark}
It is possible to obtain a more general conclusion by choosing $T_0$ with more than two leaves. However, we cannot guarantee the size of $T_0$ in this case but only some bounds on it (e.g. Lemma \ref{lem:tree-division}). The question is: Is it true that $g(p+q)\le\const g(p)g(q)$ for every $p,q\ge 1$? If this is true, not only the validity of $\lambda$ just follows but we can also conclude that $g(n) \ge \const \lambda^n$ (by Fekete's lemma \cite{fekete1923verteilung}).
\end{remark}
The proofs of the remaining lemmas use the following obvious corollary of Lemma \ref{lem:bounded-ratio-adj}.
\begin{corollary}
\label{cor:any-distance}
Given a fixed $\delta$, for every $i$, the value $g_i(n)$ is at least a constant times $g_i(n+\delta)$.
\end{corollary}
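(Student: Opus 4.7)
The plan is to obtain Corollary~\ref{cor:any-distance} by iterating Lemma~\ref{lem:bounded-ratio-adj} exactly $\delta$ times. Lemma~\ref{lem:bounded-ratio-adj} gives a constant $c>0$ (the explicit value $(\min_k s_k)/g(2)$ extracted from the previous proof, which does not depend on $n$ or $i$) such that $g_i(n)\ge c\,g_i(n+1)$ for every $i$ and every $n\ge 1$. Chaining this bound along the arithmetic progression $n, n+1,\ldots,n+\delta$ yields
\[
g_i(n)\ge c\,g_i(n+1)\ge c^2\,g_i(n+2)\ge\cdots\ge c^{\delta}\,g_i(n+\delta).
\]
Since $\delta$ is a fixed integer, the quantity $c^{\delta}$ is itself a constant independent of $n$, which is exactly the conclusion asserted by the corollary.

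The only mild point to check is that the constant extracted from Lemma~\ref{lem:bounded-ratio-adj} is genuinely uniform in $i$. Inspecting the argument given for that lemma, the bound $v_i/v_i'\le g(2)/\min_k s_k$ is obtained coordinate-wise from the inequality $u_i/s_i\le g(2)/\min_k s_k$, with a right-hand side that does not involve $i$; so the same $c$ works simultaneously for all entries. If one preferred to stay entirely formal, one could instead take $c$ to be the minimum over the finite set $i\in\{1,\ldots,d\}$ of the per-coordinate constants, which is still strictly positive.

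There is no substantive obstacle: the corollary is purely an iteration of the previous lemma, and the only ``calculation'' is the telescoping product of $\delta$ copies of the ratio bound. One might mention explicitly, as a remark, that $c^\delta$ deteriorates geometrically with $\delta$, which is why the corollary is stated for a \emph{fixed} $\delta$ and not uniformly in $\delta$; any application that needs $\delta$ growing with $n$ would require a stronger submultiplicative estimate of the form hinted at in the remark following Lemma~\ref{lem:bounded-ratio-adj}.
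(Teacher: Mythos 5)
Your proof is correct and is exactly the intended argument: the paper treats this as an ``obvious corollary'' of Lemma~\ref{lem:bounded-ratio-adj} obtained by iterating it $\delta$ times, which is precisely your telescoping chain $g_i(n)\ge c^{\delta}g_i(n+\delta)$. Your remarks on the uniformity of the constant in $i$ and on the dependence of $c^{\delta}$ on the fixed $\delta$ are accurate but not needed beyond what the paper implicitly assumes.
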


\begin{proof}[Proof of Lemma \ref{lem:comparing-lim-inf-sup}]
Suppose there is an edge $ki$ in the dependency graph. For each $n$, let $T$ be a tree so that the $i$-th entry of the associated vector is $g_i(n)$. Consider the tree $T'$ with $n+1$ leaves where the left (resp. right) branch of the root is $T$ if $k$ left (resp. right) depends on $i$, and the other branch is just a single leaf. It can be seen from $T'$ that there exists a constant $\alpha>0$ such that
\[
g_k(n+1)\ge \alpha g_i(n).
\]

Let $i,j$ be two vertices so that there exists a path of length $d$ from $j$ to $i$, there exists a constant $\beta>0$ such that
\[
g_j(n+d)\ge \beta g_i(n).
\]

Since $g_j(n)\ge \gamma g_j(n+d)$ where $\gamma>0$ is the constant obtained from Corollary \ref{cor:any-distance}, we have
\begin{equation}
\label{eq:graph->value-comparision}
g_j(n) \ge \gamma g_j(n+d) \ge \gamma\beta g_i(n).
\end{equation}

If there is a path from $j$ to $i$, it follows from Equation \eqref{eq:graph->value-comparision} that
\begin{align*}
\liminf_{n\to\infty} \sqrt[n]{g_i(n)} &\le \liminf_{n\to\infty} \sqrt[n]{g_j(n)},\\
\limsup_{n\to\infty} \sqrt[n]{g_i(n)} &\le \limsup_{n\to\infty} \sqrt[n]{g_j(n)}.
\end{align*}
It is indeed the case when $i\in C_1, j\in C_2$ and $C_1<C_2$.

If $i, j$ are in the same component, then there exist a path from $i$ to $j$ and also a path from $j$ to $i$. Apply the above inequalities to both $i,j$ and $j,i$, we obtain
\begin{align*}
\liminf_{n\to\infty} \sqrt[n]{g_i(n)} &= \liminf_{n\to\infty} \sqrt[n]{g_j(n)},\\
\limsup_{n\to\infty} \sqrt[n]{g_i(n)} &= \limsup_{n\to\infty} \sqrt[n]{g_j(n)}.
\end{align*}
\end{proof}
\begin{proof}[Proof of Lemma \ref{lem:same-comp-make-pattern}]
We can assume $i \ne j$, since otherwise, we just set $P'=P$ as $\lambda_P\ge M_{i,i}$.

Since $i,j$ are two distinct vertices in the same component, there is always a path from $j$ to $i$. Let the path be $k_0,\dots,k_d$, where $d$ is the length of the path and $k_0=j$, $k_d=i$. Construct the trees $T_0,\dots,T_d$ such that $T_d=T$, and for $t<d$, one of the two branches of the root of $T_t$ is $T_{t+1}$ and the other is just a single leaf. If $k_t$ left (resp. right) depends on $k_{t+1}$ then the branch of $T_{t+1}$ is on the left (resp. right) in $T_t$.

Let $P'$ be the pattern $(T',\ell')$ for $T'=T_0$, $\ell'=\ell$, and $M'$ the matrix associated with $P'$. We can see that $|T'|-|T|$ is bounded and $M'_{j,j}$ is at least a constant times $M_{i,j}$. It follows that $\lambda_{P'}$ is at least a constant times $M_{i,j}$ since $\lambda_{P'}\ge M'_{j,j}$.
\end{proof}

\begin{proof}[Proof of Lemma \ref{lem:entries-not-too-big}]
Let the vector associated with $T$ be $v=Ms$. Since $v_i=\sum_j M_{i,j} s_j$ and $v_i\le g_i(n)$, the value $M_{i,j}$ for any $j$ is at most a constant times $g_i(n)$.
\end{proof}
\begin{proof}[Proof of Lemma \ref{lem:g_i(n)-large}]
Since there is a path from $i$ to $j$ for every $i\in C$ and any other $j$, by the same argument as in Equation \eqref{eq:graph->value-comparision}, there exists a constant $\alpha_j>0$ such that
\[
g_i(n)\ge\alpha_j g_j(n).
\]

If we do not fix $j$, let $\alpha=\min_j \alpha_j$, we have
\[
g_i(n)\ge \max_j \alpha_j g_j(n) \ge \alpha\max_j g_j(n) = \alpha g(n).
\]
\end{proof}

\begin{proof}[Proof of Lemma \ref{lem:tree-division}]
This fact is well known and its easy verification is left to the readers.
\end{proof}

\section{Proof of Theorem \ref{thm:no-pattern-ex}}
\label{sec:no-pattern-ex}
Consider a pattern $P=(T,\ell)$ with its matrix
\[
\begin{bmatrix}
  a&b\\
  c&d
\end{bmatrix}.
\]

It is verifiable that $a \ge 1, b \ge 1, c = 0$ and $d = 1$ (the readers can check for themselves, e.g. by induction through the manipulations of patterns and matrices throughout the proof).
The spectral radius of the matrix can be also seen to be $a$. Therefore, the rate is the $(|T|-1)$-th root of $a$, where $|T|$ is the number of leaves in $T$.

Consider some two patterns $P_1=(T_1,\ell_1)$ and $P_2=(T_2,\ell_2)$ with their associated matrices respectively
\[
\begin{bmatrix}
  a_1&b_1\\
  0&1
\end{bmatrix},
\begin{bmatrix}
  a_2&b_2\\
  0&1
\end{bmatrix}.
\]
Their product is
\[
\begin{bmatrix}
  a_1 a_2 & a_1 b_2 + b_1\\
  0&1
\end{bmatrix}
,
\]
which is the matrix associated with the pattern $P=(T,\ell)$ with $T$ obtained from $T_1$ by replacing $\ell_1$ by $T_2$ and letting $\ell$ be $\ell_2$.

We have
\begin{equation}
\label{eq:at-most-its-sub}
\bar\lambda_P \le\max\{\bar\lambda_{P_1},\bar\lambda_{P_2}\},
\end{equation}
since $|T|-1=(|T_1|-1)+(|T_2|-1)$ and the spectral radius of the product is $a_1 a_2$.

Suppose there is a pattern $P$ with $\bar\lambda_{P}=\lambda$, let $P^*=(T^*,\ell^*)$ be a pattern with the minimal number of leaves in the tree among all such patterns. By Equation \eqref{eq:at-most-its-sub}, we can see that $P^*$ is not decomposable into two patterns in that way. In other words, one branch of the root of $T^*$ is just the marked leaf $\ell^*$.

Let the other branch than the branch of the marked leaf, denoted by $T'$, have the associated vector $(a, 1)$, then the matrix associated with $P^*$ is
\[
\begin{bmatrix}
  a&1\\
  0&1
\end{bmatrix}
.
\]

We have $\bar\lambda_{P^*}=\sqrt[m]{a}$, where $m$ is the number of leaves in $T'$.

Let $T''$ be a tree where each branch of the root is a copy of $T'$. The vector associated with $T''$ is $(a^2 + 1, 1)$. Since $\sqrt[2m]{a^2+1} > \sqrt[m]{a}$, if we replace $T'$ in $T^*$ by $T''$, we obtain another pattern with a higher rate than $\bar\lambda_{P^*}$, a contradiction.
\begin{remark}
A corollary from the proof is $\lambda=\sup_n \sqrt[n]{g(n)}$. This can be also obtained by Fekete's lemma since $g(n)=g_1(n)$ and $g_1(p+q)\ge g_1(p)g_1(q)$ for $p,q\ge 1$.
\end{remark}
\section{Proof of Theorem \ref{thm:fibo}}
\label{sec:fibo}
It can be seen that $g_1(n)\ge F_{n+1}$ and $g_2(n)\ge F_n$ for every $n\ge 1$ since the vector $(F_{n+1}, F_n)$ is associated with the tree $T^{n-1}$ for the pattern $(T,\ell)$ where $T$ is the tree with two leaves, the marked leaf $\ell$ is the one on the left.

In order to show that they are also the upper bounds, we prove the following lemma.
\begin{lemma}
Let $\{F_n\}_{n\ge 0}$ be the Fibonacci sequence with $F_0=0, F_1 = 1, F_2=1$, then the following inequalities
\begin{align*}
  F_{p} F_{q-1} + F_{p-1} F_{q} &\le  F_{p+q-1},\\
  F_p F_q &\le F_{p+q-1}
\end{align*}
hold for every $p,q\ge 1$
\end{lemma}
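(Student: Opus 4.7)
The plan is to reduce both inequalities to the classical Fibonacci addition identity
\[
F_{p+q-1} = F_p F_q + F_{p-1} F_{q-1},
\]
which I would establish first by induction on $q$ with $p$ fixed. The base case $q=1$ reads $F_p = F_p\cdot 1 + F_{p-1}\cdot 0$, and the inductive step follows by writing $F_{q+1}=F_q+F_{q-1}$ and applying the induction hypothesis at $q$ and $q-1$. (Alternatively, one could cite the standard identity $F_{m+n}=F_mF_{n+1}+F_{m-1}F_n$ and specialize.)

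Given this identity, the second inequality is immediate, since
\[
F_{p+q-1}-F_pF_q = F_{p-1}F_{q-1} \ge 0
\]
for all $p,q\ge 1$ (with equality exactly when $p=1$ or $q=1$, as $F_0=0$).

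For the first inequality, I would subtract the left-hand side from the identity and look for a factorization. A short computation gives
\[
F_{p+q-1}-\bigl(F_pF_{q-1}+F_{p-1}F_q\bigr) = F_pF_q+F_{p-1}F_{q-1}-F_pF_{q-1}-F_{p-1}F_q = (F_p-F_{p-1})(F_q-F_{q-1}).
\]
Using $F_p-F_{p-1}=F_{p-2}$ for $p\ge 2$ (and similarly for $q$), this equals $F_{p-2}F_{q-2}\ge 0$. The corner cases $p=1$ or $q=1$ I would handle separately by direct inspection: for $p=1$ the first inequality reduces to $F_{q-1}\le F_q$, and symmetrically for $q=1$.

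I do not expect any serious obstacle here: the main content is spotting that the difference in the first inequality factors into a product of two Fibonacci numbers, which is the correct algebraic perspective. The only thing to watch is the bookkeeping around $F_{-1}$ and $F_0$ when $p$ or $q$ equals $1$, which is why I would separate those cases rather than try to push a single uniform formula through all indices.
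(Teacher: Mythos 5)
Your proposal is correct, and it takes a genuinely different route from the paper. The paper proves the two inequalities directly by a two-dimensional induction: it verifies the base cases $p\in\{1,2\}$ or $q\in\{1,2\}$ by hand, and then in the inductive step expands $F_p=F_{p-1}+F_{p-2}$ and $F_q=F_{q-1}+F_{q-2}$, regroups the resulting products into instances of the induction hypothesis at $(p',q')\in\{p-1,p-2\}\times\{q-1,q-2\}$, and sums the bounds. You instead first establish the exact addition identity $F_{p+q-1}=F_pF_q+F_{p-1}F_{q-1}$ (a specialization of $F_{m+n}=F_mF_{n+1}+F_{m-1}F_n$) and then read off both inequalities as corollaries with explicit nonnegative slack: $F_{p+q-1}-F_pF_q=F_{p-1}F_{q-1}$ and $F_{p+q-1}-(F_pF_{q-1}+F_{p-1}F_q)=(F_p-F_{p-1})(F_q-F_{q-1})$, which is $F_{p-2}F_{q-2}$ for $p,q\ge 2$ and is handled directly in the corner cases. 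Your computations check out (e.g.\ for $p=q=3$: $F_5=5=F_3F_3+F_2F_2$ and the first slack is $F_1F_1=1$). What your approach buys is a cleaner argument that identifies exactly when equality holds and requires only a one-dimensional induction (for the identity); what the paper's approach buys is self-containment at the level of pure inequalities, at the cost of a bulkier case analysis. Either proof serves the application in Section~\ref{sec:fibo} equally well.
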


\begin{proof}
The conclusion holds for any $(p,q)\in  (\{1,2\} \times \mathbb N^+ )\cup (\mathbb N^+ \times \{1,2\})$, i.e. one of the four conditions $p=1$, $p=2$, $q=1$, $q=2$ holds.

For the first inequality, if $p=1$ (similarly for $q=1$), then the inequality is equivalent to $F_{q-1} \le F_q$. If $p=2$ (similarly for $q=2$), then it is equivalent to $F_{q-1}+F_q \le F_{q+1}$.

For the second inequality, if $p=1$ (similarly for $q=1$), then the inequality is equivalent to $F_q \le F_q$. If $p=2$ (similarly for $q=2$), then it is equivalent to $F_q \le F_{q+1}$. 

We prove the lemma by induction. Suppose the inequalities hold for any $(p', q') \in \{p-1, p-2\} \times \{q-1, q-2\}$, we show that they also hold for $(p,q)$. 

Indeed,
\begin{align*}
  F_{p} F_{q-1} + F_{p-1} F_{q} &= (F_{p-2}+F_{p-1})(F_{q-3}+F_{q-2}) + (F_{p-3}+F_{p-2})(F_{q-2}+F_{q-1})\\
  &= (F_{p-2} F_{q-3}+F_{p-3} F_{q-2})+(F_{p-2} F_{q-2}+F_{p-3} F_{q-1})\\
  &\;\;\;\;+(F_{p-1} F_{q-3}+F_{p-2} F_{q-2})+(F_{p-1} F_{q-2}+F_{p-2} F_{q-1}) \\
  &\le F_{p+q-5} + F_{p+q-4} + F_{p+q-4} + F_{p+q-3} \\
  &= F_{p+q-3} + F_{p+q-2} \\
  &= F_{p+q-1},
\end{align*}
and
\begin{align*}
  F_p F_q &= (F_{p-2}+F_{p-1}) (F_{q-2}+F_{q-1})\\
  &= F_{p-2} F_{q-2}+F_{p-2} F_{q-1}+F_{p-1} F_{q-2}+F_{p-1} F_{q-1}\\
  &\le F_{p+q-5} + F_{p+q-4} + F_{p+q-4} + F_{p+q-3}\\
  &\le F_{p+q-3} + F_{p+q-2}\\
  &\le F_{p+q-1}.
\end{align*}

By induction, the inequalities hold for every $p,q\ge 1$.
\end{proof}

Now the verification for the upper bounds of $g_1(n)$ and $g_2(n)$ becomes clear. They hold trivially for $n=1$. For higher $n$, if $g_1(n)$ corresponds to a tree where the left branch of the root has $p$ leaves and the right branch has $q$ leaves ($p+q=n$), then the same bounds hold: 
\[
g_1(n) \le g_1(p) g_2(q) + g_2(p) g_1(q) = F_{p+1} F_q + F_p F_{q+1} \le F_{p+q+1}=F_{n+1},
\]
and 
\[
g_2(n) \le g_1(p) g_2(q) = F_{p+1} F_q \le F_{p+q}=F_n.
\]

Being both lower bounds and upper bounds, we have $g_1(n)=F_{n+1}$ and $g_2(n)=F_n$.

\section{Some further discussions}
\label{sec:discussions}
It is natural to require the coefficients of $*$ to be nonnegative and the proof of the limit becomes very simple when the coefficients are positive. Indeed, one can extract it from the original proof for the case where the dependency graph is connected, or even better by showing easily that for every $i$, the sequence $\{c^{(i)}_{i,i} g_i(n)\}_n$ is supermultiplicative (and then applying Fekete's lemma). Actually, when the dependency graph is only connected, applying the trick in the proof of Lemma \ref{lem:comparing-lim-inf-sup} on both branches, one can also show that for every $i$, the sequence $\{\alpha_i g_i(n-\delta_i)\}_n$ for some constant $\alpha_i$ and some integer $\delta_i$ is supermultiplicative. This is the approach for a second proof of the limit in this specific case. By Fekete's lemma, we have a corollary that $g(n) \le \const \lambda^n$ beside the convergence of $\sqrt[n]{g(n)}$. This is left as an exercise for the readers. 

However, it does not mean that requiring $s$ to be positive is unnecessary. For example, the following system does not have a growth rate: $s=(1,0)$ and $x*y= (x_2 y_2, x_1 y_1)$. The readers can verify that $g(n) = 0$ if $3\mid n$ and $g(n)=1$ otherwise.

In general, we cannot allow either any entry of $s$ or any coefficient of $*$ to be negative even after choosing an appropriate norm for $g(n)$, say the greatest absolute value of an entry (i.e. maximum norm). For example, if some entries of $s$ are allowed to be negative, consider the following system: $s=(1,-1,1)$ and $x*y = (x_1y_1, x_2y_2, 3x_1y_3 + 3x_2y_3)$, we have $g(n)=1$ for even $n$ and $g(n)=6^{\frac{n-1}{2}}$ for odd $n$. If some coefficients of $*$ are allowed to be negative, consider the following system: $s=(1,1,1)$ and $x*y=(x_1y_1, -x_2y_2, 3x_1y_3 - 3x_2y_3)$, we have the same $g(n)$, that is $g(n)=1$ for even $n$ and $g(n)=6^{\frac{n-1}{2}}$ for odd $n$. We leave the simple verification of these facts to the readers as an exercise. This behavior is rather different from that of matrices, where the convergence of $\sqrt[n]{\|A^n\|}$ holds for any complex matrix $A$, not necessarily nonnegative ones, by Gelfand's formula. Note that although our work is about bilinear maps, the problem mostly boils down to linear maps. 

A more sophisticated example than what have presented so far can be found in \cite{rote2019maximum} with growth rate $\sqrt[13]{95}$ and a complex linear pattern. It actually solves a problem on the maximal number of minimal dominating sets in a tree. One can also find a proof of the validity of $\lambda$ for that particular case there as the dependency graph is connected (with the trick described above). In fact, the function $g(n)$ there does not use the maximum norm but a linear combination of the entries in the resulting vectors, and the vector $s$ there has some zero entries. 

We also give an interesting example that looks more like the system in Theorem \ref{thm:no-pattern-ex} than the one in Theorem \ref{thm:fibo} but has the growth rate closer to the growth rate in Theorem \ref{thm:fibo} than the one in Theorem \ref{thm:no-pattern-ex}. Consider the system $(*,s)$ with $s=(1,1)$ and $x*y=(x_1 y_1 + x_2 y_2, x_1 y_1)$. Actually we have not been able to calculate the growth rate of this system, but an estimation by its sequence $g(n)$ would guess it is just a bit larger than the growth rate $\phi$ of Theorem \ref{thm:fibo}. In particular, $\lambda \ge \phi$ since the pattern $(T,\ell)$ with $T$ having two leaves and $\ell$ being one of them has the rate $\phi$. For the case $T$ is a tree of three leaves with one of the branches being precisely the marked leaf $\ell$ (the other branch is a tree of two leaves), the readers can check that the rate of this pattern is greater than $\phi$, hence $\lambda > \phi$. Although $\lambda$ is closer to $\phi$ than the growth rate in Theorem \ref{thm:no-pattern-ex}, the trees corresponding to $g(n)$ in this system seem to follow a symmetric pattern like in Theorem \ref{thm:no-pattern-ex} rather than a linear pattern as in Theorem \ref{thm:fibo}. We do not investigate this example here but just would like to point out this as an interesting open problem, and that the growth of bilinear maps defines a new ``language'' to describe a class of constants.

We would note that the maximum norm of $g(n)$ is chosen for convenience without loss of generality. That is because every two arbitrary norms on $\mathbb R^d$ are within a constant factor of each other. In other words, except the statements on the precise value of $g(n)$, the other statements on its asymptotic behaviors still hold with another norm.

The title of the work is about bilinear maps as they are quite popular. However, the approach still works for multilinear maps with almost no essential change. The readers can see that we did not use any specific property of bilinear maps but fixing one input and converting them into linear maps for the other input. Fixing all but one input works for multilinear maps as well. The matrix manipulation for linear maps still remains the same.
\section*{Acknowledgement}
The author would like to thank G\"unter Rote for introducing the problem and reading the proofs; Roman Karasev and especially the anonymous referee for their suggestions to various improvements in the presentation.
\bibliographystyle{unsrt}
\bibliography{gbm}

\end{document}